\theoremstyle{plain}
\numberwithin{equation}{section}
\newcommand{\bR}{{\mathbb R}}
\newcommand{\bC}{{\mathbb C}}
\newcommand{\cH}{{\mathcal H}}
\newcommand{\cC}{{\mathcal C}}
\newcommand{\cS}{{\mathcal S}}
\newtheorem{lemma}{Lemma}
\newtheorem{proposition}{Proposition}
\newtheorem{theorem}{Theorem}
\newcommand{\ket}[1]{\left\vert #1\right\rangle}
\newcommand{\bra}[1]{\left\langle #1\right\vert}
\newcommand{\Tr}{\mbox{Tr}}
\newcommand{\sign}{\mathop{\rm sign}}
\newcommand{\be}{\begin{equation}}
\newcommand{\ee}{\end{equation}}
\newcommand{\bea}{\begin{eqnarray}}
\newcommand{\eea}{\end{eqnarray}}
\newcommand{\beann}{\begin{eqnarray*}}
\newcommand{\eeann}{\end{eqnarray*}}
\begin{document}
\title{Complete criterion for convex-Gaussian-state detection}
\author{Anna Vershynina\\
\textit{\small{Institute for Quantum Information, RWTH Aachen University,}}\\
\textit{\small{52056 Aachen, Germany}}\\
\small{annavershynina@gmail.com}}

\date{\today}


\maketitle
\begin{abstract}
We present a new criterion that determines whether a fermionic state is a convex combination of pure Gaussian states. This criterion is complete and characterizes the set of convex-Gaussian states from the inside. If a state passes a program it is a convex-Gaussian state and any convex-Gaussian state can be approximated with arbitrary precision by states passing the criterion.  The criterion is presented in the form of a sequence of solvable semidefinite programs. It is also complementary to the one developed by de Melo, \'{C}wikli\'{n}ski and Terhal, which aims at characterizing the set of convex-Gaussian states from the outside. Here we present an explicit proof that criterion by de Melo et al. is complete, by estimating a distance between an $n$-extendible state, a state that passes the criterion, to the set of convex-Gaussian states. 
\end{abstract}

\section{Introduction}

One of the goals in quantum information theory is to achieve the universal quantum computation. Majorana fermions, first introduced in \cite{Majorana}, give a possible way to realize it. Such Majorana fermions have been reportedly observed in superconductor-semiconductor systems \cite{Mourik}-\cite{Rokhinson}. Braiding (interchanging) of Majorana fermions can be used to implement fault-tolerant gates \cite{Bravyi06}-\cite{Beenakker} and an efficient protocol for braiding Majorana fermions in atomic wire networks that is robust against experimentally relevant errors has been proposed in \cite{Kraus}. A set of universal quantum gates has been proposed in \cite{Bravyi02}, although the gates include a quadric interaction between Majorana fermions. In \cite{Bravyi06} it was shown that any computation that uses only braiding operators on a system of Majorana fermions initially prepared in a fermionic Gaussian state can be efficiently simulated classically. It was also shown there that the universal computation can be carried out if a supply of  two specific states, $\ket{a_4}$ and $\ket{a_8}$ on 4 and 8 Majorana fermions respectively, is available.

In \cite{Terhal} it was shown that if the noisy ancillae are convex mixtures of Gaussian fermionic states and the computation involves only non-interacting fermions operations, one can classically simulate such quantum computation. Hence rises the interest in developing a complete criterion that determines whether a states is a convex mixture of Gaussian fermionic states. A complete criterion is an algorithm that characterizes the target set with an arbitrary precision. 

Several works have been done to find such criteria. In \cite{Terhal} de Melo et al. found a criterion similar to the one established for separable states \cite{Doherty}. Both criteria are based on the notion of 'extendibility' and approach the target set (convex-Gaussian or separable states) from the outside. The criterion in \cite{Terhal} is presented in the form of a sequence of solvable semi-definite programs. The authors in \cite{Oszmaniec} presented another characterization of the set of convex-Gaussian states in a particular case when a state space is on 4 fermionic modes, i.e. involves 8 Majorana fermions.

In this paper we show that the criteria introduced in \cite{Terhal} is complete, a result which was implied there. We prove it by estimating a distance from a $n$-extendible state to the set of convex-Gaussian states and show that this distance converges to zero as $n$ goes to infinity. This shows that any non-convex-Gaussian state will fail the program at some point.

A different criterion for detecting separable states was established in \cite{Navascues} that describes the set of separable states from the inside. Taking inspiration from this work, we establish a criterion characterizing the set of convex-Gaussian states from the inside as well. The criterion is given in the form of a hierarchy of semi-definite programs. A state that passes a program is convex-Gaussian and any convex-Gaussian state can be approximated with an arbitrary precision by states that pass the program. 

The paper is organized as follows. In "Set up" section we establish the basic knowledge on Gaussian and convex-Gaussian states, introduce the necessarily notations and definitions, and present the known results that will be useful later. In "Quantitative bound" section we present and prove the quantitative bound that shows that the criteria established in \cite{Terhal} is complete. In "Criterion for convex-Gaussian states detection" section we develop a criteria characterizing the set of convex-Gaussian states from inside in the form of a hierarchy of semi-definite programs. In Appendix A we provide a complete proof started in \cite{Terhal} of the fact that the null-space of the operator $\Lambda$ (\ref{Lambda}) is a Gaussian-symmetric subspace of a symmetric space. In Appendix B we discuss an equivalent construction of the extendible states and the semi-definite program using the isomorphism between the tensor product of $n$ algebras of $2m$ Majorana fermions $\cC_{2m}^{\otimes n}$  and an algebra with $2mn$ Majorana fermions $\cC_{2mn}$.

\section{Set up}

We consider a system of $m$ fermionic modes, the Hilbert space of which is a Fock space $\text{Fock}(\bC^m)$. Creation and annihialation operators on this space, denoted by $a^\dagger_j$, $a_j$, $j=1,..,m$,  satisfy canonical-anticommutation relations (CAR) $$\{a_j, a^\dagger_k\}=\delta_{jk}I\ \text{ and }\ \{a_j, a_k\}=0.$$ 
The Fock basis is defined by $\ket{n_1,..n_m}=(a^\dagger_1)^{n_1}\cdots (a^\dagger_m)^{n_m} \ket{0}$, where the occupation numbers take values $n_j\in\{0,1\}$ and the vacuum state $\ket{0}$ satisfies $a_j\ket{0}=0$ for all $j=1,...,m$. 

Instead of $2m$ creation and annihilation operators define $2m$ Hermitian operators, called \textit{Majorana fermion operators} \cite{Majorana}:
$$c_{2k-1}=a_k+a_k^\dagger\ \text{and }\ c_{2k}=i(a_k^\dagger-a_k), \text{ }\ k=1,...,m.$$
Majorana operators form a Clifford algebra, which will be denoted as $\cC_{2m}$, i.e. they satisfy the following relations $$c_j=c_j^\dagger,\ \text{and} \ c_j^2=I, \ \text{for } j=1,...,2m;$$ $$c_kc_j=-c_jc_k, \ \text{for } k,j=1,...,2m.$$

Any operator $X\in\cC_{2m}$ can be represented as a polynomial in $\{c_j\}_j$ and identity.  The operator is called \textit{even (odd)} if it can be written as a linear combination of products of an even (odd) number of Majorana operators. In other words, any even operator $X\in\cC_{2m}$ can be written as
\begin{equation}\label{even_op}
X=\alpha_0 I/2^m+\sum_{k=1}^{m}\sum_{1\leq a_1<...<a_{2k}\leq 2m}\alpha_{a_1...a_{2k}}\,i^kc_{a_1}...c_{a_{2k}},
\end{equation}
where the coefficients $\alpha_0=2^{-m}\Tr(X)$ and $\alpha_{a_1...a_{2k}}$ are real. From the properties of Majorana operators each correlator is Hermitian $(i^kc_{a_1}...c_{a_{2k}})^\dagger=i^kc_{a_1}...c_{a_{2k}}.$

Writing an even density matrix $\rho$ in the form (\ref{even_op}), will give us the following restrictions on the coefficients: $\alpha_0=1,$ and for any subset $\{a_1,...,a_{2k}\}\subset\{1,...,2m\}$ we have $|\alpha_{a_1,..,a_{2k}}|\leq1/2^{m-1}.$  These conditions come from the fact that $\rho\geq 0$, $\Tr\rho=1$ and $\|\rho-I/2^m\|_1\leq 2$, (for more insight see the proof of Theorem \ref{thm:inside}). Here the trace norm is defined for any Hermitian operator $M$ as:
\begin{equation}\label{trace_norm}
\|M\|_1:=\Tr\{\sqrt{M^\dagger M}\}=\max_{-I\leq \Pi\leq I}\Tr\{\Pi\, M\}.
\end{equation}

\textit{Fermionic Gaussian state} $\rho$ is an even state of the form $\rho=K\, \exp(-i\sum_{j\neq k}\beta_{jk}c_jc_k)$ with real-antisymmetric matrix $\beta=\{\beta_{jk}\}_{j,k}$ and normalization $K$. Block-diagonalizing matrix $\beta$ with an orthogonal matrix $R\in SO(2m)$, one can re-express any Gaussian state $\rho$ in standard form as
\begin{equation}\label{gaussian_standard}
\rho=\frac{1}{2^m}\prod_{k=1}^m(I+i\lambda_k\tilde{c}_{2k-1}\tilde{c}_{2k}),
\end{equation}
where $\tilde{c}=R^Tc$ and the coefficients $\lambda_j\in [-1, 1]$. Gaussian pure states are the one such that $\lambda_j\in\{-1,1\}$, which can be easily checked from the fact that $\rho^2=\rho$ for pure states.

For any state $\rho\in\cC_{2m}$ define \textit{a correlation matrix} $M$ as a $2m\times 2m$ real anti-symmetric matrix with elements
$$M_{jk}=\frac{i}{2}\Tr(\rho[c_j,c_k]), \ \ \text{for }j,k=1,...,2m.$$
For $j=k$, we have $M_{jj}=0$ and for $j\neq k$, we have $M_{jk}=\frac{i}{2}\Tr(\rho c_jc_k-\rho c_k c_j)=i\Tr(\rho c_jc_k).$  Matrix $M$ is anti-symmetric, because $M_{jk}=-M_{kj}$, for $j\neq k$.

Since any real antisymmetric matrix can be block-diagonalized by an orthogonal matrix $R\in SO(2m)$, the correlation matrix $M$ can be written in the form
$$M=R\bigoplus_{j=1}^m\left(\begin{array}{cc}
0&\lambda_j\\
-\lambda_j& 0\\
\end{array}\right)R^T.$$
For any even state the \textit{singular values} $\{\lambda_k\}_{k=1}^{m}$ of $M$ lie in the interval $[-1,1]$, since each of them is the expectation value of the Hermitian operator $i\tilde{c}_{2k-1}\tilde{c}_{2k}$, which has $\pm 1$ eigenvalues. This condition corresponds to having $M^TM\leq I$ for any even state. The equality $M^TM=I$, or in other words the equality $\lambda_k=\pm 1$, is satisfied if and only if the state is pure Gaussian \cite{Bravyi}.

The correlation matrix $M$ fully determines a Gaussian state, which can be seen from the standard form (\ref{gaussian_standard}). All higher order correlators are determined by Wick's theorem \cite{Wick}.

\textit{A convex-Gaussian} state is defined as a convex combination of pure Gaussian states, in other words, the set of convex-Gaussian states is defined as
\begin{align*}
G_c=\{\rho=\sum_i p_i\ket{\psi_i}\bra{\psi_i}\ : \ &\ket{\psi_i}\text{ is Gaussian } \forall i,\\
&\text{ and }\sum_ip_i=1, p_i\geq 0\}.
\end{align*}

Every Gaussian state is convex-Gaussian, which can be seen from (\ref{gaussian_standard}) by writing each term in the product as $I+i\lambda_k c_{2k-1}c_{2k}=p_k(I+ic_{2k-1}c_{2k})+(1-p_k)(I-ic_{2k-1}c_{2k})$, with $\lambda_k=2p_k-1$. On the other hand, a convex combination of two Gaussian states is convex-Gaussian, but may not necessarily be a Gaussian state, since each of them may require a different orthogonal matrix $R$ for diagonalization (\ref{gaussian_standard}). In other words, the set of Gaussian states is not convex and it lies in a convex set of convex-Gaussian states, which also includes some non-Gaussian states.

The following proposition shows that if a state is close enough to the maximally mixed Gaussian state $I/2^m$, it is convex-Gaussian.

\begin{proposition}\label{Convex}
(\cite{Terhal}, Theorem 1) For any even state $\rho\in\cC_{2m}$ there exists $\epsilon>0$ such that $\rho_\epsilon=\epsilon\rho+(1-\epsilon) I/2^m\in G$ is convex-Gaussian.
\end{proposition}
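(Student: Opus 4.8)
The plan is to prove that the maximally mixed state $I/2^m$ lies in the relative interior of the convex-Gaussian set $G_c=\mathrm{conv}(\mathcal{P})$, where $\mathcal{P}\subset\cC_{2m}$ is the compact set of pure Gaussian states and the relative interior is taken inside the affine space $\mathcal{H}$ of even Hermitian unit-trace operators. Granting this, the proposition follows at once: the given even state $\rho$ lies in $\mathcal{H}$, so the segment $\rho_\epsilon=\epsilon\rho+(1-\epsilon)I/2^m$ moves from the interior point $I/2^m$ toward another point of the affine hull, and hence remains in $G_c$ for all sufficiently small $\epsilon>0$.

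First I would realise $I/2^m$ as a barycenter of $\mathcal{P}$ with full support. Conjugation by the $Pin(2m)$ lift of $O\in O(2m)$ sends the standard pure Gaussian state $\sigma_0=2^{-m}\prod_{k=1}^m(I+ic_{2k-1}c_{2k})$ to another pure Gaussian state, and in fact all of $\mathcal{P}$ is a single $O(2m)$-orbit of $\sigma_0$ (orthogonal conjugation realises every canonical form of an antisymmetric orthogonal correlation matrix, including all sign patterns of the $\lambda_k$). Averaging $\sigma_0$ over $O(2m)$ with Haar measure therefore yields a state supported on all of $\mathcal{P}$; by invariance this average is the unique $O(2m)$-fixed unit-trace even operator, namely $I/2^m$.

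The crux --- and the step I expect to be the main obstacle --- is that $\mathcal{P}$ affinely spans $\mathcal{H}$, i.e. the linear span $W$ of pure Gaussian states is the whole even subalgebra. I would argue this representation-theoretically. Conjugation acts on $c_1,\dots,c_{2m}$ by the defining representation of $O(2m)$, so the even part of $\cC_{2m}$ decomposes as $\bigoplus_{k=0}^{m}\wedge^{2k}\R^{2m}$ and every degree projection $P_{2k}$ is equivariant. Since $\mathcal{P}$ is an $O(2m)$-orbit, $W$ is $O(2m)$-invariant. Expanding $\sigma_0$ shows that its degree-$2k$ part (with leading term $i^k c_1\cdots c_{2k}$ and coefficient $2^{-m}$) is nonzero for every $k$, so $P_{2k}\sigma_0$ is a nonzero element of $W\cap\wedge^{2k}\R^{2m}$. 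Now $\wedge^{2k}\R^{2m}$ is $SO(2m)$-irreducible whenever $2k\ne m$, while in the exceptional middle degree $2k=m$ (only for even $m$) its self-dual and anti-self-dual summands are interchanged by an orientation-reversing element, so it is irreducible under $O(2m)$. In either case the $O(2m)$-invariant space $W$ must contain the whole component $\wedge^{2k}\R^{2m}$; ranging over $k$ shows $W$ is the entire even subalgebra, so $\mathrm{aff}(\mathcal{P})=\mathcal{H}$.

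Finally I would combine the two facts through a standard convexity lemma: if the barycenter of a probability measure has full support on a compact set whose affine hull is $\mathcal{H}$, then the barycenter lies in the relative interior of the convex hull --- for if it lay on the relative boundary, a supporting hyperplane would have to contain the whole support, contradicting the full affine span. Applied to the Haar average above this gives $I/2^m\in\mathrm{relint}\,G_c$, and the reduction in the first paragraph completes the proof. The only delicate point is the middle-degree irreducibility $2k=m$, which is precisely why the argument is run with $O(2m)$ rather than $SO(2m)$.
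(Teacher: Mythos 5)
Your argument is correct in substance but follows a genuinely different route from the one the paper relies on. The paper does not reprove this proposition: it imports the proof of Theorem 1 of de Melo--\'Cwikli\'nski--Terhal, which is explicit --- one exhibits an over-complete family of $(2m)!/m!$ pure Gaussian states $\zeta(a)=2^{-m}\prod_k(I+i\beta_k c_{a_{2k-1}}c_{a_{2k}})$ whose expansions produce every even correlator, writes $\rho-I/2^m=\sum_a\omega_a\zeta(a)-c\,I/2^m$ with $\omega_a\ge 0$ and $c=\sum_a\omega_a\le (2m)!/m!$ (using $|\alpha_S|\le 2^{1-m}$), and absorbs the negative identity term. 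That construction yields the quantitative bound $\epsilon\ge 1/(1+(2m)!/m!)$, which this paper actually needs downstream (it enters $\chi(n)$ and $\delta(n)$ in Theorem \ref{thm:inside}). Your proof replaces the explicit spanning family by a representation-theoretic argument ($\mathcal{P}$ is a single $O(2m)$-orbit, its Haar barycenter is $I/2^m$, the orbit affinely spans the even trace-one Hermitian operators) plus a supporting-hyperplane lemma; this is conceptually cleaner and correctly handles the genuinely delicate points (using $O(2m)$ rather than $SO(2m)$ both to reach all parity sectors of $\mathcal{P}$ and to kill the top-degree invariant $c_1\cdots c_{2m}$), but it is purely existential and would not serve as a drop-in replacement where the explicit $\epsilon$ is used. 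One step you must justify: from ``$W$ is $O(2m)$-invariant and $P_{2k}$ is equivariant'' it does not follow in general that $P_{2k}\sigma_0\in W$; you need that $\bigoplus_{k}\wedge^{2k}\R^{2m}$ is multiplicity-free as an $O(2m)$-module, so that each $P_{2k}$ is an isotypic projection and $W=\bigoplus_k(W\cap\wedge^{2k}\R^{2m})$. This is true --- $\wedge^{2k}\cong\wedge^{2m-2k}\otimes\det$ as $O(2m)$-modules, and for $2k\ne m$ the restriction to $SO(2m)$ is irreducible, so $\wedge^{2k}\otimes\det\not\cong\wedge^{2k}$ and no two summands coincide --- but it is a point of the same nature as the middle-degree issue you do flag, and should be stated rather than elided.
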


In other words, a small perturbation of the normalized identity state $I/2^m$ by any state still lie in the set of convex-Gaussian states. From the proof of Proposition \ref{Convex} (\cite{Terhal}) and the proof of Theorem \ref{thm:inside} that we provide later, one can lower bound $\epsilon$ as a function of $m$. We can say that for any state $\rho\in\cC_{2m}$ there exists $\epsilon\geq\frac{1}{1+(2m)!/m!}$ such that the state $\rho_\epsilon=\epsilon\rho+(1-\epsilon) I/2^m\in G_c$ is convex-Gaussian.

Consider the Hermitian operator
\begin{equation}\label{Lambda}
\Lambda=\sum_{j=i}^{2m}c_j\otimes c_j,
\end{equation}
 which belongs to the tensor product of two Clifford algebras $\cC_{2m}\otimes\cC_{2m}$.
It was first introduced in \cite{Bravyi} and was later investigated in \cite{Terhal}. It was shown there that an even state $\rho\in\cC_{2m}$ satisfies 
 \begin{equation*}
 \Lambda(\rho\otimes\rho)=0\text{ }\iff\rho\text{ is a pure Gaussian state.}
 \end{equation*}

The approximate result is also true, i.e. if a state such that $\|\Lambda(\rho\otimes\rho)\Lambda\|_1\leq\epsilon$, it is close to a pure Gaussian state.

\begin{lemma}\label{lemma:app}
If a state $\tau\in\cC_{2m}$ is such that 
$$\|\Lambda(\tau\otimes\tau)\Lambda\|_1\leq \nu, $$
then there exists a Gaussian state $\ket{\psi}$ close to $\tau$, i.e. such that 
$$\|\tau-\ket{\psi}\bra{\psi} \|_1\leq \sqrt{m\,\nu}.$$
\end{lemma}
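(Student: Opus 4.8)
The plan is to collapse the hypothesis onto the correlation matrix of $\tau$ and then use positivity of $\tau$ to promote a purely second-order bound into a lower bound on the overlap with a suitable pure Gaussian state. First I would note that $\Lambda(\tau\otimes\tau)\Lambda = ((\tau\otimes\tau)^{1/2}\Lambda)^\dagger((\tau\otimes\tau)^{1/2}\Lambda) \geq 0$, so its trace norm equals its trace, and by cyclicity
$$\|\Lambda(\tau\otimes\tau)\Lambda\|_1 = \Tr[\Lambda^2(\tau\otimes\tau)] = \sum_{j,k}(\Tr[c_jc_k\tau])^2.$$
Evaluating with $\Tr[c_j^2\tau]=1$ and $\Tr[c_jc_k\tau]=-iM_{jk}$ for $j\neq k$, and using the block form of $M$ with singular values $\lambda_k$, this reduces to $\|\Lambda(\tau\otimes\tau)\Lambda\|_1 = 2\sum_{k=1}^m(1-\lambda_k^2)$. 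Hence the hypothesis says precisely $\sum_k(1-\lambda_k^2)\leq\nu/2$; this is the stability version of the exact criterion, since $M^TM=I$ (all $\lambda_k^2=1$) means pure Gaussian.

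Next I would fix the target pure Gaussian. Diagonalizing $M$ by $R\in SO(2m)$ and passing to $\tilde c = R^Tc$, set $\ket{\psi}\bra{\psi} = \frac{1}{2^m}\prod_{k=1}^m(I + i\mu_k\tilde c_{2k-1}\tilde c_{2k})$ with $\mu_k=\sign(\lambda_k)$; by the standard form~(\ref{gaussian_standard}) this is pure Gaussian. Writing $A_k := i\tilde c_{2k-1}\tilde c_{2k}$, the operators $A_1,\dots,A_m$ commute, satisfy $A_k^2=I$, and $\ket{\psi}\bra{\psi} = \prod_k\frac{I+\mu_kA_k}{2}$ is exactly the joint spectral projector onto the sign pattern $\mu=(\mu_1,\dots,\mu_m)$.

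The crux is the lower bound on $\bra{\psi}\tau\ket{\psi}$, and here positivity of $\tau$ enters decisively. The orthogonal projectors $\Pi_\sigma := \prod_k\frac{I+\sigma_kA_k}{2}$, indexed by $\sigma\in\{\pm1\}^m$, sum to $I$, so $p(\sigma):=\Tr[\Pi_\sigma\tau]\geq0$ with $\sum_\sigma p(\sigma)=1$ defines a genuine probability distribution on $\{\pm1\}^m$ whose marginals satisfy $\E[\sigma_k]=\Tr[A_k\tau]=\lambda_k$. Since $\bra{\psi}\tau\ket{\psi}=\Tr[\Pi_\mu\tau]=p(\mu)$, a union bound gives $1-p(\mu)\leq\sum_k\Pr[\sigma_k\neq\mu_k]=\sum_k\frac{1-|\lambda_k|}{2}\leq\frac12\sum_k(1-\lambda_k^2)\leq\frac{\nu}{4}$. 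The Fuchs--van de Graaf inequality for the pure state $\ket{\psi}$ then yields $\|\tau-\ket{\psi}\bra{\psi}\|_1\leq 2\sqrt{1-\bra{\psi}\tau\ket{\psi}}\leq 2\sqrt{\nu/4}=\sqrt{\nu}\leq\sqrt{m\nu}$.

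The main obstacle is conceptual rather than computational: the hypothesis constrains only the second-order correlators of $\tau$, whereas the trace distance to a pure state depends on correlators of every order, so no naive estimate can suffice. The step requiring the most care is therefore the passage through the commuting subalgebra generated by the $A_k$, where positivity of $\tau$ converts the correlation-matrix bound into control of the full overlap via the induced classical law $p$. The factor $m$ in the statement is slack in this argument: it in fact produces the sharper estimate $\sqrt{\nu}$, which implies the stated $\sqrt{m\nu}$ since $m\geq1$.
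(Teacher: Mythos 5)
Your proof is correct and follows essentially the same strategy as the paper: the same target state $\ket{\psi}$ built from $\mu_k=\sign\lambda_k$, the same identity $\|\Lambda(\tau\otimes\tau)\Lambda\|_1=2m-\Tr M_\tau^TM_\tau$, and the same reduction of the overlap $\bra{\psi}\tau\ket{\psi}$ to a classical probability distribution on sign patterns followed by Fuchs--van de Graaf. Indeed, your $p(\sigma)=\Tr[\Pi_\sigma\tau]$ is exactly the distribution $\{\alpha_k\}$ that the paper extracts via its explicit dephasing channel, so omitting the dephasing is a presentational simplification rather than a new idea. The one substantive difference is your final bookkeeping: the paper bounds each $\Pr[\sigma_j\neq\mu_j]$ separately by the worst case $\tfrac12(1-\sqrt{1-\nu/2})$ and multiplies by $m$, whereas you sum $\tfrac12(1-|\lambda_j|)\le\tfrac12(1-\lambda_j^2)$ and use the global constraint $\sum_j(1-\lambda_j^2)\le\nu/2$; this legitimately removes the factor of $m$ and yields the sharper conclusion $\|\tau-\ket{\psi}\bra{\psi}\|_1\le\sqrt{\nu}$, which of course implies the stated $\sqrt{m\nu}$ (and would propagate to a slightly better $\epsilon(n)$ in Theorem \ref{thm:outside}).
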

\begin{proof}
Let $M_{\tau}$ be a correlation matrix of state $\tau$. Write the matrix in a block-diagonal form $$M_\tau=R\bigoplus_{j=1}^m \left( \begin{array}{cc}
0 & \lambda_j \\
-\lambda_j & 0
 \end{array} \right)R^T. $$
 
 Apply a dephasing procedure to $\tau$, that is described in the following way, define $$\tau_0=\tau \text{ and }\tau_k=\frac{1}{2}(\tau_{k-1}+U\tau_{k-1}U^\dagger), \text{ for }k=1,...,m$$ where $U$ is the following FLO, for each $k=1,..,m$
 $$U_kc_{2k}U_k^\dagger=-c_{2k}, \ U_kc_{2k-1}U_k^\dagger=-c_{2k-1}, $$
  and $U_kc_jU^\dagger=c_j$ for $j\neq 2k, 2k-1$, $j=1,..,m$. 

After the dephasing procedure, the state $\tau_m$ contains only mutually commuting operators $ic_{2k-1}c_{2k}$, $k=1,..,m$ and so its eigendecomposition involves the eigenstates of these operators, which are Gaussian pure states. Hence the state $\tau_m$ is convex-Gaussian, i.e. it is in the form $$\tau_m=:\sum_k\alpha_k\ket{\phi_k}\bra{\phi_k},$$ where $\sum_k\alpha_k=1,$ $\alpha_k\geq 0$ and $\ket{\phi_k}\bra{\phi_k}=\frac{1}{2^m}\prod_{j=1}^m(I+i\beta_j^kc_{2j-1}c_{2j})$ is a pure Gaussian state with $\beta_j^k\in\{-1,1\}$ for all $k, j=1,..,m$. Note that each $\ket{\phi_k}$ is an eigenvector to all $ic_{2l-1}c_{2l}$, $l=1,..,m$.
Such dephasing procedure leaves the correlation matrix $M_\tau$ invariant (\cite{Terhal}).

Construct a pure Gaussian state in the following way, let
$$\ket\psi\bra\psi=\frac{1}{2^m}\prod_{j=1}^m(I+i\gamma_jc_{2j-1}c_{2j})$$
where  $\gamma_j=\sign\lambda_j$. 
 
The state $\ket\psi$ is an eigenvector to all $ic_{2k-1}c_{2k}$, because $ic_{2k-1}c_{2k}\ket\psi\bra\psi=\gamma_k\ket\psi\bra\psi.$ Therefore the state $\ket{\psi}$ is contained in the eigendecomposition of a dephased state $\tau_m$. Let it be the first state, i.e. $\ket{\phi_0}\bra{\phi_0}:=\ket\psi\bra\psi$ and let us denote $\beta_j^0:=\gamma_j=\sign\lambda_j$.

Having in mind the relationship between the fidelity and the trace-norm $$\|\tau-\ket{\psi}\bra{\psi}\|_1\leq 2\sqrt{1-F(\tau,\ket{\psi})},$$ consider the fidelity $F(\tau, \ket{\psi})=\bra{\psi}\tau\ket{\psi}.$
Since the dephasing procedure left the correlation matrix $M_\tau$ invariant, the fidelity $F(\tau, \ket{\psi})$ is the same before and after the dephasing procedure. Therefore $F(\tau,\ket\psi)=\bra{\phi_0}\tau_m\ket{\phi_0}=\alpha_0 $ and so the trace distance is bounded above by $$\|\tau-\ket\psi\bra\psi\|_1 \leq 2\sqrt{1-\alpha_0}.$$

 To bound $\alpha_0$, we use the relation between $\lambda_j$ and $\alpha_k$:
 \begin{align*}
\lambda_j&=M_\tau(2j-1,2j)=\frac{i}{2}\Tr(\tau_m[c_{2j-1}, c_{2j}])\\
&=\Tr(ic_{2j-1}c_{2j}\tau_m)=\sum_k\beta^k_j\alpha_k.
\end{align*}

Since $\alpha_k$ are probability coefficients, i.e. $\sum_k\alpha_k=1$, we have that $$\lambda_j=1-2\sum_{k:\beta_j^k=-1}\alpha_k=2\sum_{k:\beta^k_j=1}\alpha_k-1.$$

 Using the bound on the trace norm $\nu\geq\|\Lambda\tau\otimes\tau\Lambda\|_1=2m-\Tr M_\tau^TM_\tau=2m-2\sum_j\lambda_j^2,$ we find that $$\sum_j\lambda_j^2\geq m-\nu/2.$$ Since every $\lambda_j^2\leq 1$, it follows that for every $j$, $\lambda_j^2\geq 1-\nu/2.$
Therefore for every $j=1,...,m$, we have $1-\nu/2\leq\lambda^2_j\leq 1$. Consider two possible cases:
\begin{enumerate}
\item In the case when $\sqrt{1-\nu/2}\leq \lambda_j\leq 1$, we have $\sum_{\beta^k_j=-1}\alpha_k\leq \frac{1}{2}(1-\sqrt{1-\nu/2})$. Note that $\alpha_0$ is not in the sum, since in the case when $\lambda_j$ is close to $1$ and so $\beta^0_j=1$.

\item In the case when $-\sqrt{1-\nu/2}\geq \lambda_j\geq -1$, we have $\sum_{\beta^k_j=1}\alpha_k\leq \frac{1}{2}(1-\sqrt{1-\nu/2})$. Note that $\alpha_0$ is not in the sum, since in the case when $\lambda_j$ is close to $-1$ and so $\beta^0_j=-1$.
\end{enumerate}

Summing all inequalities on $\alpha$-s (for every $j$), we obtain 
$$\sum_{j=1}^m\sum_{k:\beta^k_j=-\sign\lambda_j}\alpha_k\leq \frac{m}{2}(1-\sqrt{1-\nu/2}).$$
 There is no $\alpha_0$ in the sum above and the only $\alpha$ absent in the sum is $\alpha_0$, so 
 $$1-\alpha_0=\sum_{k\neq0}\alpha_k\leq   \frac{m}{2}(1-\sqrt{1-\nu/2}),$$
therefore 
$$\|\tau-\ket\psi\bra\psi\|_1 \leq 2\sqrt{1-\alpha_0}\leq \sqrt{m\,\nu}. $$

\end{proof}

The null-space of the operator $\Lambda$ is spanned by states $\ket{\psi, \psi}$, where $\psi$ is Gaussian, and therefore is called \textit{Gaussian-symmetric subspace}. This fact was partially proved in \cite{Terhal}, Appendix A. The complete proof is provided here in Appendix A.

In \cite{Terhal}, a semi-definite program was introduced to determine whether a state is convex-Gaussian or not. A state that is not convex-Gaussian fails the program. The criterion uses the notion of extendibility. A state $\rho\in\cC_{2m}$ is said to have a \textit{$n$-Gaussian-symmetric extension} iff there exists a state $\rho^{(n)}\in\cC_{2m}^{\otimes n}$ such that 
\begin{enumerate}
\item $\rho^{(n)}\geq 0$ and $\Tr \rho^{(n)}=1$
\item $\Tr_{2,...,n}\rho^{(n)}=\rho $ \textit{(extension)}
\item $ \Lambda^{k,l}\rho^{(n)}=0, \text{ for any }k\neq l$ \textit{(symmetry)}.
\end{enumerate}
In this paper the set of states that has a $n$-Gaussian-symmetric extension is denoted by 
\begin{align*}
G^{(n)}=\{\rho\in\cC_{2m}\ : \ &\rho\text{ has }n\text{-Gaussian-symmetric}\\
&\text{extension }\rho^{(n)}\in\cC_{2m}^{\otimes n}\}.
\end{align*}

Any convex-Gaussian state $\rho=\sum_i p_i\ket{\psi_i}\bra{\psi_i}$ can be extended to any number of parties, i.e. for any $n$, the state $\rho^{(n)}=\sum_i p_i\ket{\psi_i}\bra{\psi_i}^{\otimes n}$ is the $n$-Gaussian-symmetric extension of the convex-Gaussian state $\rho\in G^{(n)}$. Therefore $G^{(n)}\supseteq G_c$, for any $n$. Clearly, $G^{(n+1)}\subseteq G^{(n)}$ for any $n$, so there is the following sequence of inclusions $G^{(1)}\supseteq G^{(2)}\supseteq...\supseteq G^{(n)}\supseteq...\supseteq G_c$. 

The following semi-definite program was introduced in \cite{Terhal}:

\textbf{Program 1.} \begin{align*} \text{\textit{Input:}} &\ \ \rho\in\cC_{2m}\text{ and an integer }n\geq 2\\
\text{ \textit{Body:}} &\ \text{ Is there } n\text{-Gaussian-symmetric extension }\rho^{(n)}\\
&\text{ of a state }\rho\\
\text{\textit{Output:}} &\ \text{ \textbf{yes,} then provide }\rho^{(n)},\text{ or \textbf{no}}.
\end{align*}

In \cite{Terhal}, Theorem 2, it was shown that if a state has a $n$-Gaussian-symmetric extension for all $n$, then the state is convex-Gaussian. Therefore the family of semi-definite programs forms a criterion consisting in checking if a state $\rho\in G^{(n)}$ for all $n$. In the next section we show that in finite time one can get arbitrary close to the set of convex-Gaussian states $G_c$ from the outside, i.e. we prove that $G^{(n)}$ converge to $G_c$ from outside, $\lim_{n\rightarrow\infty} G^{(n)}=G_c$.  This proves that the criterion is complete. In other words, if a state $\rho\in\cC_{2m}$ is not convex-Gaussian, then there exists $n$ such that $\rho\notin G^{(n)}$, so all non-convex-Gaussian states can be eventually detected.

\section{Quantitative bound}

How close is a an $n$-extendible state to a set of convex-Gaussian states $G_c$? We construct a convex-Gaussian state that is close to a given $n$-extendible state and show that the distance between them converges to zero as $n$ goes to infinity.

\begin{theorem}\label{thm:outside}
Suppose that an even state $\rho\in\cC_{2m}$ has an extension to $n$ number of parties, with $n\geq 1$. Then the state $\rho$ is close to a convex-Gaussian state, i.e. there exists a convex-Gaussian state $\sigma(n)=\sum_jp_j\ket{\psi_j(n)}\bra{\psi_j(n)}$ such that
$$ \|\rho-\sigma(n)\|_1\leq {\epsilon}(n) :=\min\{2,\, 10\|\Lambda\|_1^2\frac{2^{2m}}{n^{1/3}}\}.$$
Here  $\|\Lambda\|_1=2(m+1)\binom{2m}{m+1}$.
\end{theorem}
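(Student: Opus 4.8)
The plan is to feed a finite quantum de Finetti theorem into the approximate Gaussian characterization of Lemma~\ref{lemma:app}. The case $n=1$ is covered by the universal bound $\|\rho-\sigma\|_1\le 2$, so assume $n\ge 2$. The first observation is that the symmetry conditions (3) make $\rho^{(n)}$ automatically permutation-invariant: since $\Lambda^{k,l}\rho^{(n)}=0$ and $\rho^{(n)}\ge 0$, the support of $\rho^{(n)}$ lies in $\bigcap_{k\neq l}\ker\Lambda^{k,l}$, and each $\ker\Lambda^{k,l}$ is contained in the symmetric subspace of the pair $(k,l)$ because the null space of $\Lambda$ is a Gaussian-symmetric subspace of the symmetric space. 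Invariance under every transposition gives full permutation invariance; in particular all one-party marginals coincide and, by (2), equal $\rho$, while the two-party marginal $\tilde\rho:=\Tr_{3,\dots,n}\rho^{(n)}$ satisfies $\Lambda\tilde\rho=0$ and $\Tr_2\tilde\rho=\rho$.

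The engine is the de Finetti step. Because $\tilde\rho$ is the two-party marginal of a permutation-symmetric state on $n$ factors of local (even-sector) dimension at most $2^m$, the finite quantum de Finetti theorem yields a probability measure $\mu$ on even states with $R:=\int\omega\otimes\omega\,\rd\mu(\omega)$ obeying $\|\tilde\rho-R\|_1\le\delta(n)$, where $\delta(n)=C\,2^{2m}/n$ for an absolute constant $C$. The decisive feature is that the de Finetti states are \emph{i.i.d.}, of the form $\omega\otimes\omega$, which is exactly the input of Lemma~\ref{lemma:app}. Since $\Lambda(\omega\otimes\omega)\Lambda\ge 0$, its trace norm equals its trace; averaging over $\mu$ and using $\Lambda\tilde\rho\Lambda=0$ together with $\|\Lambda\|_\infty\le\|\Lambda\|_1$ gives $\int\|\Lambda(\omega\otimes\omega)\Lambda\|_1\,\rd\mu=\Tr[\Lambda^2R]=\|\Lambda R\Lambda\|_1\le\|\Lambda\|_1^2\,\delta(n)$.

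Next I truncate by Markov's inequality. For a threshold $\nu>0$ the bad set $B=\{\omega:\|\Lambda(\omega\otimes\omega)\Lambda\|_1>\nu\}$ has $\mu(B)\le\|\Lambda\|_1^2\delta(n)/\nu$, and for $\omega\notin B$ Lemma~\ref{lemma:app} provides a pure Gaussian $\ket{\psi_\omega}$ with $\|\omega-\ket{\psi_\omega}\bra{\psi_\omega}\|_1\le\sqrt{m\nu}$ (for $\omega\in B$ take any fixed Gaussian state). Setting $\sigma(n):=\int\ket{\psi_\omega}\bra{\psi_\omega}\,\rd\mu(\omega)\in G_c$ and using $\rho=\Tr_2\tilde\rho$, $\int\omega\,\rd\mu=\Tr_2R$, and that partial trace does not increase the trace norm, the triangle inequality gives $\|\rho-\sigma(n)\|_1\le\delta(n)+\sqrt{m\nu}+2\|\Lambda\|_1^2\delta(n)/\nu$. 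Balancing the last two terms at $\nu\sim(\|\Lambda\|_1^2\delta(n))^{2/3}$ converts the $1/n$ de Finetti rate into the $\delta(n)^{1/3}\sim n^{-1/3}$ rate of the statement; bounding the resulting $m$- and dimension-dependent prefactors crudely (using $\|\Lambda\|_1^2\,2^{2m}\ge 1$) absorbs everything into $10\|\Lambda\|_1^2\,2^{2m}/n^{1/3}$, and intersecting with the universal bound $2$ produces the $\min$. Finally, as the even states form a compact subset of the finite-dimensional real space of even operators, Carath\'eodory's theorem lets me rewrite the integral for $\sigma(n)$ as a finite convex combination $\sum_j p_j\ket{\psi_j(n)}\bra{\psi_j(n)}$.

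The main obstacle is the de Finetti step, which supplies the $2^{2m}/n$ trace-norm error and hence the whole quantitative content; one must invoke a version valid for marginals of symmetric states and arrange that its measure be supported on even density matrices, so that Lemma~\ref{lemma:app} applies verbatim. The only other delicate point is the cube-root optimization over $\nu$, which is precisely what degrades the de Finetti $1/n$ to the $1/n^{1/3}$ in the bound; the remaining steps are triangle inequalities, Markov's inequality, and the positivity of $\Lambda(\omega\otimes\omega)\Lambda$.
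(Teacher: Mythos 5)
Your proposal is correct and follows essentially the same route as the paper: deduce permutation symmetry of $\rho^{(n)}$ from the $\Lambda^{k,l}$ conditions, apply the finite quantum de Finetti theorem to the two-party marginal, hit the de Finetti approximation with $\Lambda$, split the mixture into components with small versus large weight (your Markov-inequality truncation is just the measure-theoretic phrasing of the paper's case split at $p_j\gtrless n^{2/3}\Delta(n)$), invoke Lemma~\ref{lemma:app} on the well-behaved components, and close with the triangle inequality. Your optimization of the threshold $\nu$ reproduces the paper's fixed choice $\nu=n^{-2/3}$ up to constants, so the argument matches the paper's proof in all essentials.
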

\begin{proof}
For any state $\sigma(n)$ the upper bound $ \|\rho-\sigma(n)\|_1\leq2$ follows from the linearity of the trace norm and the fact that the trace norm of any state is one.

To prove the non-trivial upper bound, we assume that $n$ is sufficiently large. By definition, the state $\rho$ has an extension $\rho^{(n)}$ to $n$ parties if $\Tr_{2,...n}\rho^{(n)}=\rho$, $\Tr\rho^{(n)}=1$, $\rho^{(n)}\geq 0$ and for every pair $1\leq k\neq l\leq n$, $\Lambda^{k,l}\rho^{(n)}=0$. 

Since the state $\rho^{(n)}$ is in the null-space of every operator $\Lambda_{k,l}$, its restriction to every two spaces $\rho^{(n)}_{kl}$, $1\leq k\neq l\leq n$, is symmetric (see Theorem \ref{thm:null} in Appendix A). And so the state $\rho^{(n)}$ is symmetric on $\cC_{2m}^{\otimes n}$. Therefore we may invoke quantum de Finetti theorem (see e.g., \cite{Christandl} Theorem II.8). For large enough $n$, there exist states $\tau_j(n)\in\cC_{2m}$ and probability coefficients $p_j(n)\geq 0$, $\sum_jp_j(n)=1$, such that
$$\|\rho^{(n)}_{1,2}-\sum_jp_j(n)\tau_j(n)\otimes\tau_j(n)\|_1\leq \gamma(n):=4\frac{2^m}{n},$$
where $\rho^n_{1,2}=\Tr_{3,...,n}\rho^{(n)}\in\cC_{2m}\otimes \cC_{2m}$ is a partial trace of the extension.

Applying the operator $\Lambda$ to the state inside the trace norm above, we obtain
\begin{align*}
&\|\Lambda\|_1^2\, \gamma(n)=:\Delta(n)\geq\|\Lambda\sum_jp_j(n)\tau_j(n)\otimes\tau_j(n)\Lambda\|_1\\
&=\sum_jp_j(n)\|\Lambda(\tau_j(n)\otimes\tau_j(n))\Lambda\|_1.
\end{align*}
The right-hand side equality is due to the fact that every term is positive, i.e. $p_j(n)\geq 0$ and $\Lambda(\tau_j(n)\otimes\tau_j(n))\Lambda$ is a positive operator. 

Therefore, for every $j=1,...m$, we obtain $$p_j(n)\|\Lambda(\tau_j(n)\otimes\tau_j(n))\Lambda\|_1\leq \Delta(n).$$

Consider two possible cases, when the coefficient $p_j(n)$ is small and big enough, i.e. for large enough $n$,
\begin{enumerate}
\item for every $j$, such that $p_j(n)<{n}^{2/3}{\Delta(n)}$, take any pure Gaussian state $\ket{\psi_j(n)}$ in the desired convex-Gaussian state $\sigma(n)=\sum_jp_j(n)\ket{\psi_j(n)}\bra{\psi_j(n)}$;
\item for every $j$, such that $p_j(n)\geq {n}^{2/3}{\Delta(n)}$, we have that $\|\Lambda(\tau_j(n)\otimes\tau_j(n))\Lambda\|_1\leq \Delta(n)/p_j\leq1/{n}^{2/3}.$ In the next lemma we prove that such a state, $\tau_j(n)$, is close to a pure Gaussian state.
\end{enumerate}

Note that here one can take a critical value of $p_j$ in the form $n^\epsilon\Delta(n)$ for any $0<\epsilon<1$. We chose $\epsilon=2/3$ for later convenience.

Applying Lemma \ref{lemma:app} with $\nu_j(n)=1/{n}^{2/3}$ to our situation, we obtain that for every $j$, such that $p_j(n)\geq n^{2/3}\Delta(n)$, the state $\tau_j(n)$ is close to a pure Gaussian state $\ket{\psi_j(n)}$ constructed in the Lemma. As it was said before, for every $j$, such that $p_j(n)<n^{2/3}\Delta(n)$, we set $\ket{\psi_j(n)}$ to be any Gaussian state. Then the state $\rho$ is close to the convex-Gaussian state:
\begin{align*}
\|\rho&-\sum_jp_j\ket{\psi_j(n)}\bra{\psi_j(n)} \|_1\\
\leq& \|\rho-\sum_jp_j\tau_j(n) \|_1+\sum_j p_j\|\tau_j(n)-\ket{\psi_j(n)}\bra{\psi_j(n)}\|_1\\
\leq& \gamma(n)+\sum_{j: p_j\geq {n}^{2/3}\Delta(n)} p_j\|\tau_j(n)-\ket{\psi_j(n)}\bra{\psi_j(n)}\|_1\\
&+2\sum_{j: p_j< {n}^{2/3}\Delta(n)} p_j\\
\leq& \gamma(n)+\sqrt{m\,\nu(n)}+2{n}^{2/3}\Delta(n)2^m\\
<& 10\|\Lambda\|_1^2\frac{2^{2m}}{n^{1/3}}.
\end{align*}
\end{proof}

From this theorem we see that the sets of $n$-extendible states $G^{(n)}$ are approximating the set of convex-Gaussian states $G_c$ from the outside. In the next section we develop a criterion characterizing the set of convex-Gaussian from the inside.

 \section{Criterion for convex-Gaussian states detection}
 
 Remind the Proposition \ref{Convex}, which says that a small enough perturbation of the normalized identity state by any state is contained in the set of convex-Gaussian states. Since with growing $n$, the $n$-extendible states are getting closer to the set of convex-Gaussian states, it is reasonable to believe that one may take a bigger part of this state in the perturbation and still stay in the set of convex-Gaussian states. In other words, Proposition \ref{Convex} claims that for any state $\rho\in\cC_{2m}$ there exists $\delta>0$ such that the state
 $$\tilde{\rho}:=(1-\delta)\rho+\delta\, I/2^m $$ is convex-Gaussian. We claim that for states $\rho\in G^{(n)}$ that have extension to $n$ number of parties value $\delta$ goes to $0$ as $n$ goes to infinity. Here we switched the notation from $\epsilon$ in Proposition \ref{Convex} to $\delta=1-\epsilon$, to emphasize that $\epsilon$ is growing for $n$-extendible states with growing $n$.
 
For any $n\geq 1$,  define sets
 $$\tilde{G}^{(n)}=\{\tilde{\rho}:=(1-\delta(n))\rho+\delta(n)I/2^m\, |\, \rho\in G^{(n)} \},$$
 where 
 \begin{equation}\label{delta}
 \delta(n):=\frac{\chi(n)}{1+\chi(n)} \ \text{ and }\ \chi(n):=\frac{\epsilon(n)}{2}\frac{(2m)!}{m!}, 
 \end{equation} here $\epsilon(n)$ is the distance from a complement of the set $G^{(n)}$ to the set of convex-Gaussian states $G_c$, which can be taken from Theorem \ref{thm:outside}, $\epsilon(n)=\min\{2,\, 10\|\Lambda\|_1^2\frac{2^{2m}}{n^{1/3}}\}.$ Note that $\delta(n)\rightarrow 0$ as $n\rightarrow \infty$.

In the next theorem we prove that any state $\tilde{\rho}\in\tilde{G}^{(n)}$ is convex-Gaussian, i.e. $\tilde{G}^{(n)}\subseteq G_c$. From the proof it will also be clear that we can perturb any convex-Gaussian state to a state inside $\tilde{G}^{(n)}$.
 
 \begin{theorem}\label{thm:inside}
 For any $n$-extendible state $\rho\in G^{(n)}$, $n\geq 1$, the state $\tilde{\rho}=(1-\delta(n))\rho+\delta(n)I/2^m$ is convex-Gaussian for the above choice of $\delta(n)$ (\ref{delta}).  \\
 Moreover, for any convex-Gaussian state $\rho_c$ and for any $n$, there exists a state $\omega_n\in\tilde{G}^{(n)}$ such that they are close, $$\|\rho_c-\omega_n\|_1\leq 2\delta(n).$$
  In particular, the sets $\tilde{G}^{(n)}$ approximate the set of convex-Gaussian states $G_c$ from inside.

 \end{theorem}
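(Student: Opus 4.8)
My plan is to reduce both halves to a single quantitative ``interior'' statement: an even, trace-one operator whose trace-norm distance to the maximally mixed state $I/2^m$ is at most $2/N$, with $N=(2m)!/m!$, is convex-Gaussian. This is the sharp form of Proposition \ref{Convex}, and the normalisations $\chi(n)=\tfrac{\epsilon(n)}{2}N$ and $\delta(n)=\chi(n)/(1+\chi(n))$ are engineered precisely so that the output of Theorem \ref{thm:outside} can be fed into it.

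For the first claim, I would start from Theorem \ref{thm:outside}: the $n$-extendible state $\rho$ satisfies $\|\rho-\sigma\|_1\le\epsilon(n)$ for some convex-Gaussian $\sigma=\sigma(n)\in G_c$. Writing $D:=\rho-\sigma$ (even, Hermitian, traceless, with $\|D\|_1\le\epsilon(n)$) and using $1-\delta=1/(1+\chi)$, hence $\tfrac{1-\delta}{\delta}=1/\chi$, I would verify the algebraic identity
\[\tilde\rho=(1-\delta)\rho+\delta\,I/2^m=(1-\delta)\,\sigma+\delta\,\omega,\qquad \omega:=I/2^m+\tfrac{1}{\chi}D.\]
Since $G_c$ is convex and $\sigma\in G_c$, it suffices to show $\omega\in G_c$. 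Now $\omega$ is even with unit trace, and
\[\|\omega-I/2^m\|_1=\tfrac1\chi\|D\|_1\le\frac{\epsilon(n)}{\chi}=\frac{2}{N},\]
so the interior statement gives $\omega\in G_c$, and therefore $\tilde\rho\in G_c$.

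The substance is thus the interior lemma, which simultaneously supplies the bound $\epsilon\ge 1/(1+N)$ quoted after Proposition \ref{Convex}. I would prove it by explicit averaging over the finite family of ``standard'' pure Gaussian states $g_{\pi,\gamma}=2^{-m}\prod_{l=1}^m(I+i\gamma_l c_{p_l}c_{q_l})$ indexed by a perfect matching $\pi=\{(p_l,q_l)\}_{l=1}^m$ of $\{1,\dots,2m\}$ together with signs $\gamma\in\{\pm1\}^m$; there are exactly $2^m(2m-1)!!=(2m)!/m!=N$ of them, and averaging $g_{\pi,\gamma}$ uniformly over $\gamma$ (for each fixed $\pi$) shows $\tfrac1N\sum_{\pi,\gamma}g_{\pi,\gamma}=I/2^m$. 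Expanding $X-I/2^m=\sum_{\vec a}\beta_{\vec a}\,i^k c_{a_1}\cdots c_{a_{2k}}$ in the Majorana basis, the key estimate is that each coefficient obeys $|\beta_{\vec a}|=2^{-m}\,|\Tr((X-I/2^m)\,i^k c_{a_1}\cdots c_{a_{2k}})|\le 2^{-m}\|X-I/2^m\|_1$, because each correlator is Hermitian with $\pm1$ eigenvalues and $\|M\|_1=\max_{-I\le\Pi\le I}\Tr(\Pi M)$. I would then solve the linear moment-matching problem $X=\sum_{\pi,\gamma}q_{\pi,\gamma}\,g_{\pi,\gamma}$ by perturbing the uniform weights $q_{\pi,\gamma}=\tfrac1N$; the positivity constraints $q_{\pi,\gamma}\ge0$ are what cap the admissible perturbation and produce the threshold $\|X-I/2^m\|_1\le 2/N$. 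This positivity-preserving weight perturbation — getting the constant exactly right and checking that the standard family positively spans every even correlator direction (each even index set is a union of matched pairs for some $\pi$) — is the main obstacle; the cruder ``one correlator at a time'' construction wastes a combinatorial factor and does not reach the $N=(2m)!/m!$ scaling.

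The second claim is then immediate. Any convex-Gaussian $\rho_c$ is $n$-extendible, so $\rho_c\in G^{(n)}$, whence $\omega_n:=(1-\delta(n))\rho_c+\delta(n)I/2^m\in\tilde G^{(n)}$ directly from the definition of $\tilde G^{(n)}$. Finally
\[\|\rho_c-\omega_n\|_1=\delta(n)\,\|\rho_c-I/2^m\|_1\le 2\delta(n),\]
using $\|\rho_c-I/2^m\|_1\le\|\rho_c\|_1+\|I/2^m\|_1=2$; since $\delta(n)\to0$, this exhibits $G_c$ as the inner limit of the sets $\tilde G^{(n)}$.
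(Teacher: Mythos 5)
Your proposal is correct and, underneath the repackaging, follows the same route as the paper: both start from Theorem \ref{thm:outside} to get a convex-Gaussian $\sigma$ with $\|\rho-\sigma\|_1\le\epsilon(n)$, both bound the Majorana-basis coefficients of the difference by $\epsilon(n)/2^m$ via the variational form of the trace norm, and both absorb that difference into a positive combination over the over-complete family of $(2m)!/m!$ standard pure Gaussian states plus a multiple of $I/2^m$. Your packaging is actually cleaner in one respect: the identity $\tilde\rho=(1-\delta)\sigma+\delta\,\omega$ with $\omega=I/2^m+\chi^{-1}(\rho-\sigma)$ isolates exactly why $\delta=\chi/(1+\chi)$ is the right normalization, whereas the paper verifies nonnegativity of the coefficient $\delta-(1-\delta)c$ by hand. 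The one substantive difference is where the quantitative interior statement comes from. The paper does not prove it either --- it imports the decomposition $\tau_n=\sum_a\omega_a\zeta(a)-c\,I/2^m$ with $c\le\chi(n)$ from the proof of Proposition \ref{Convex} in [de Melo--\'Cwikli\'nski--Terhal] --- while you propose to rederive it from scratch by perturbing the uniform weights on the $g_{\pi,\gamma}$, and you candidly flag the positivity-preserving weight perturbation (and the exact constant $2/N$, $N=(2m)!/m!$) as the unexecuted ``main obstacle.'' That step is genuinely the crux: routing each correlator $C_S$ through the matchings compatible with $S$ and checking that the total negative load on each $q_{\pi,\gamma}$ stays below $1/N$ requires a counting argument that you have not carried out, and if the constant came out worse than $2/N$ your choice of $\chi(n)$ would no longer close the argument. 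So either complete that combinatorial estimate or, as the paper does, cite the proof of Proposition \ref{Convex} for the bound $c\le\tfrac{1}{2}\|X-I/2^m\|_1\,(2m)!/m!$; with that input in hand, the rest of your argument, including the second claim, is complete and matches the paper.
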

 \begin{proof}
 From Theorem \ref{thm:outside} we know that for every $n$-extandible state $\rho\in G^{(n)}$ there exists a convex-Gaussian state $\sigma_n(\rho)\in G_c$ such that they are close enough $\|\rho-\sigma_n(\rho) \|_1\leq \epsilon(n)$, with $\epsilon(n)\rightarrow 0$ as $n\rightarrow \infty$. 
 
 Let $\tau_n:=\rho-\sigma_n(\rho)$. Then $\Tr\, \tau_n=0$ and the trace norm is small $\|\tau_n\|_1\leq \epsilon(n)$. Write $\tau_n$ as a linear combination of even correlators and identity as in (\ref{even_op}), i.e. $$\tau_n=\alpha_0(n) I/2^m+\sum_S\alpha_S(n) C_S,$$ where $S=\{a_1,..,a_{2k}\}\subset\{1,...,2m\}$ is a subset of even number of elements, $\alpha_S(n)\in\bR$ and $C_S=i^k c_{a_1}...c_{a_{2m}}$.
 
 Since $\Tr\, \tau_n=0$, we have $\alpha_0(n)=0$.
Using  (\ref{trace_norm}), the trace norm can be written as 
$$\|\tau_n\|_1=\max_{-I\leq \Pi\leq I}\Tr \{\Pi\, \tau_n\}.$$ 
For any $S$, take $\Pi_S:=\sign(\alpha_S(n)) C_S$. Then we obtain 
$$\epsilon(n)\geq \|\tau_n\|_1\geq\Tr(\Pi_S\, \tau_n)=|\alpha_S(n)|\Tr(I)=|\alpha_S(n)|2^m.$$
Therefore, for every subset $S$ the coefficient is small $|\alpha_S(n)|\leq \epsilon(n)/2^m$.
 
 From the proof of Proposition \ref{Convex} (see \cite{Terhal}) we may write $$\tau_n=\sum_{a}\omega_{a}(n)\zeta (a)-c(n)\, I/2^m,$$ 
 where $\{\zeta(a)\}$ is an over-complete basis of $(2m)!/m!$ pure Gaussian states, $\omega_{a}(n)\geq 0$ and the constant $c(n)=\sum_a\omega_a(n)=\sum_S\sum_{a}|\alpha_S(n)|$. Taking into account that every coefficient $|\alpha_S(n)|$ is small, the constant $c$ can be bounded from above by $c(n)\leq \frac{\epsilon(n)}{2}\frac{(2m)!}{m!}=\chi(n)$. 
 
The state $\tilde{\rho}=(1-\delta(n))\rho+\delta(n)I/2^m$ can be then written as 
 \begin{align*}
 \tilde{\rho}=&(1-\delta(n))\sigma_n(\rho)+(1-\delta(n))\sum_{a}\omega_{a}\zeta (a)\\
 & +\Bigl(\delta(n) - (1-\delta(n))c\Bigr)I/2^m.
 \end{align*}
 From definition $\delta(n)={\chi(n)}/({1+\chi(n)})$ and upper bound $c(n)\leq \chi(n)$, the coefficient in front of the identity in the equation above is non-negative, i.e. $\delta(n) - (1-\delta(n))c\geq 0$. Since the normalized identity, $\sigma_n(\rho)$ and $\zeta_a$ are convex-Gaussian states, the state $\tilde{\rho}$ is convex-Gaussian.
 
 Let $\rho_c\in G_c$ be a convex-Gaussian state. For any $n\geq 1$ this state is also $n$-extendible, $\rho_c\in G^{(n)}$. Therefore, applying the result we just proved, $\omega_n=(1-\delta(n))\rho_c+\delta(n)I/2^m\in\tilde{G}^{(n)}$ and the distance is bounded
 $$\|\rho_c-\omega_n\|_1=\delta(n)\|\rho_c-I/2^m\|_1\leq 2\delta(n). $$ 
 
 \end{proof}
 
Since every convex-Gaussian state can be approximated with an arbitrary precision by states in  sets $\{\tilde{G}^{(n)}\}_{n\geq 1}$, Theorem \ref{thm:inside} gives rise to a complete criterion determining whether a state is convex-Gaussian or not. The criteria comes in the form of a hierarchy of the following semi-definite programs: for $\chi(n)=\frac{\epsilon(n)}{2}\frac{(2m)!}{m!}$ and $\epsilon(n)=\min\{2,\, 10\|\Lambda\|_1^2\frac{2^{2m}}{n^{1/3}}\}$
 
 \textbf{Program 2.} \begin{align*} \text{\textit{Input:}}& \ \ \rho\in\cC_{2m}\text{ and }n\geq 1,\\
\text{ \textit{Body:}}& \text{ Is there an } n\text{-extendible state }\sigma\in G^{(n)}\text{ s.t.}\\
&\ \ \rho=(1-\delta(n))\sigma+\delta(n)\, I/2^m, \\
&\text{ for  } \frac{\chi(n)}{1+\chi(n)}\leq\delta(n)\leq 1\\
\text{\textit{Output:}}& \ \text{ \textbf{yes} or \textbf{no}}.
\end{align*}

Note that $\epsilon(n)$ here is the distance from the compliment of a set $G^{(n)}$ to the set of convex-Gaussian states $G_c$. Our bound on $\epsilon(n)$ in Theorem \ref{thm:outside} is non-trivial only for $n$ of order higher than $2^{18m}$. Refining the bound on a distance between any $n$-extendible state and $G_c$ for small $n$ will improve the program for smaller $n$. Nevertheless, the above program is valid for any $n$ starting from 1.

Together with Program 1, one can alternatively run both programs for every $n$. If a state fails Program 1 for some $n$, the state is not convex-Gaussian, and if a state passes Program 2 for some n, it is convex-Gaussian.

\section{Conclusion}

We have introduced a complete criterion that is able to detect whether a state is a convex mixture of Gaussian states. This criterion is complementary to the one introduced in \cite{Terhal}, which we proved here is indeed a complete criterion approximating the set of convex-Gaussian states $G_c$ from the outside. Our complementary criterion approximates the set $G_c$ from the inside by perturbing $n$-extendible states to make them convex-Gaussian. The semi-definite program depends on the distance from the compliment of a set of $n$-extendible states to the set of the convex-Gaussian states, and, unfortunately, we were able to provide a non-trivial bound only for large $n$. Further work is needed to improve the bound for small $n$. Nevertheless, a complementary criterion gives a good taste on how both tests can work together faster to determine whether a state is convex-Gaussian or not, as well as it shows a way to characterize the set of convex-Gaussian states from the inside.\\
\\

\textbf{Acknowledgment}
A.V. is grateful to Barbara M. Terhal for vital insight into the problems discussed in the paper and a valuable feedback on the manuscript.

A.V. acknowledges funding through the European Union via QALGO FET-Proactive Project No. 600700.

\section{Appendix A. Gaussian-Symmetric subspace}

Here we provide a complete proof that the null-space of the operator $\Lambda$ is spanned by states $\ket{\psi, \psi}$, where $\psi$ is Gaussian. We repeat the partial proof given in \cite{Terhal} and complete it by proving Lemma \ref{Invariant_space}.

Define a 'FLO twirl' as the map 
\begin{equation}\label{Twirl}
\cS(\rho)=\int_{FLO} dU\  U\otimes U \, \rho\, U^\dagger\otimes U^\dagger
\end{equation} 
for any $\rho\in\cC_{2m}\otimes\cC_{2m}$. Here $U\in FLO$ acts as follows: 
$$Uc_jU^\dagger=(Rc)_j=\sum_i R_{ji}c_i, \text{ for every } j=1,...,2m, $$
with $R\in SO(2m)$. The integral in (\ref{Twirl}) $\int_{FLO}dU$ is defined by taking the Haar measure over the real orthogonal matrices $R$ induced by $U$. The map $\cS(\rho)$ is normalized such that it is trace-preserving. First we prove a statement about an invariant subspace that was made in \cite{Terhal}.

\begin{lemma}\label{Invariant_space}
The invariant subspace of $\cS$ is spanned by $I\otimes I, \Lambda,..., \Lambda^{2m}$ operators.
\end{lemma}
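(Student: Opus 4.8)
The plan is to recognize $\cS$ as the averaging projection onto the subalgebra of elements of $\cC_{2m}\otimes\cC_{2m}$ fixed by the adjoint FLO action, and then to identify that subalgebra using invariant theory for the orthogonal group. Since $\cS$ averages the compact-group action $X\mapsto (U\otimes U)X(U^\dagger\otimes U^\dagger)$ against the Haar measure, it is idempotent and its image coincides with the fixed-point space $\mathcal F:=\{X : (U\otimes U)X(U^\dagger\otimes U^\dagger)=X \text{ for all }U\in FLO\}$. Thus the statement is equivalent to $\mathcal F=\mathrm{span}\{I\otimes I,\Lambda,\dots,\Lambda^{2m}\}$, and I would prove the two inclusions separately.

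For the easy inclusion $\mathrm{span}\{\Lambda^j\}\subseteq\mathcal F$, I would first check that $\Lambda$ itself is fixed. Using $Uc_jU^\dagger=\sum_iR_{ji}c_i$ and orthogonality $R^TR=I$, one computes $(U\otimes U)\Lambda(U^\dagger\otimes U^\dagger)=\sum_j(Uc_jU^\dagger)\otimes(Uc_jU^\dagger)=\sum_{i,k}(R^TR)_{ik}\,c_i\otimes c_k=\Lambda$, so every power $\Lambda^j$ is fixed as well. To see that these powers are independent and that the list stops at $2m$, I would grade $\cC_{2m}\otimes\cC_{2m}$ by bidegree, with $c_S\otimes c_T$ of bidegree $(|S|,|T|)$; expanding $\Lambda^j$ and collapsing repeated Majoranas via $c_i^2=I$ shows its top bidegree component is $j!\sum_{|S|=j}c_S\otimes c_S$, which is nonzero. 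Hence $I,\Lambda,\dots,\Lambda^{2m}$ have strictly increasing top bidegrees and are linearly independent, spanning a $(2m+1)$-dimensional space; in particular $\Lambda^{2m+1}$ and higher powers are redundant.

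The substance is the reverse inclusion $\mathcal F\subseteq\mathrm{span}\{\Lambda^j\}$, for which I would invoke the first fundamental theorem of invariant theory for the orthogonal group. Viewing $\cC_{2m}\otimes\cC_{2m}$ as generated by the two families of ``vector variables'' $a_i:=c_i\otimes I$ and $b_i:=I\otimes c_i$, the diagonal FLO action rotates both families by the same $R$ as copies of the defining representation. The FFT then says that every invariant is a polynomial in the pairwise inner products $\sum_i a_i^2$, $\sum_i b_i^2$ and $\sum_i a_i b_i$. The first two equal $2m\,(I\otimes I)$ by $c_i^2=I$ and are scalars, while $\sum_i a_i b_i=\Lambda$; therefore $\mathcal F=\bC[\Lambda]$, which by the previous paragraph equals $\mathrm{span}\{I,\Lambda,\dots,\Lambda^{2m}\}$. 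Equivalently, one can argue representation-theoretically: decomposing $\cC_{2m}\otimes\cC_{2m}=\bigoplus_{j,k}\wedge^jV\otimes\wedge^kV$ into exterior powers of the defining representation $V=\bR^{2m}$ and applying Schur's lemma, the dimension of $\mathcal F$ equals the number of pairs $(j,k)$ with $\wedge^jV\cong\wedge^kV$; since these exterior powers are pairwise non-isomorphic as orthogonal-group representations, this count is exactly $2m+1$, forcing $\mathcal F=\bC[\Lambda]$.

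The main obstacle is precisely this non-degeneracy of the invariant count, and it is where the role of the full orthogonal group is essential. The ``chirality'' elements $c_1\cdots c_{2m}\otimes I$ and $I\otimes c_1\cdots c_{2m}$ are fixed by every proper rotation and would give extra invariants (Pfaffian-type, arising from the Hodge duality $\wedge^jV\cong\wedge^{2m-j}V$) if the twirl ran only over $SO(2m)$; they are excluded only because an improper FLO transformation multiplies them by $\det R=-1$, so the Haar average over $O(2m)$ kills them. I would therefore take care to state that the FLO twirl effectively runs over the full orthogonal group (equivalently, that the Hodge-dual identifications carry a nontrivial $\det$-twist over $O(2m)$), which is exactly what makes the exterior powers pairwise inequivalent and pins the invariant space down to the powers of $\Lambda$.
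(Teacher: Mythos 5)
Your proposal is correct in substance and takes a genuinely different route from the paper. The paper's proof is a hands-on commutant computation: it expands an invariant operator over the (overcomplete) spanning set $c_{j_1}\cdots c_{j_{2m}}\otimes c_{l_1}\cdots c_{l_{2m}}$, notes that the coefficient matrix must commute with every $\hat{R}=\tilde{R}^{\otimes 2m}$, and then asserts that any such matrix is a direct sum of scalar blocks, which forces the operator to be a polynomial in $\Lambda$. Your Schur-count on the decomposition $\cC_{2m}\otimes\cC_{2m}\cong\bigoplus_{j,k}\wedge^j V\otimes\wedge^k V$ is the rigorous version of that step, and it is preferable: the commutant of $R^{\otimes k}$ is \emph{not} scalar for $k\geq 2$, so the paper's assertion is unjustified as written; what actually saves the argument is exactly the passage to the antisymmetric components $\wedge^k V$, their irreducibility, and their pairwise inequivalence as representations of the full orthogonal group, none of which the paper addresses. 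One caveat on your first argument: the classical first fundamental theorem is a statement about polynomial invariants of commuting vector variables, so quoting it for the anticommuting generators $a_i=c_i\otimes I$, $b_i=I\otimes c_i$ requires the exterior-algebra (super) version; your representation-theoretic count is the clean substitute and should carry the proof. Your bidegree argument for the linear independence of $I,\Lambda,\dots,\Lambda^{2m}$ is fine.

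Your closing remark about $SO(2m)$ versus $O(2m)$ is not merely a point of care; it identifies a genuine gap in the lemma as the paper states it. The paper defines the twirl with $R\in SO(2m)$, and under that group the chirality elements $c_1\cdots c_{2m}\otimes I$ and $I\otimes c_1\cdots c_{2m}$ are invariant (they pick up a factor $\det R$) yet lie outside the span of $I\otimes I,\Lambda,\dots,\Lambda^{2m}$, all of whose members are supported on terms $c_S\otimes c_S$ with equal index sets; the $SO(2m)$-invariant space has dimension $2(2m+1)$ rather than $2m+1$. So the statement holds only if the twirl runs over all of $O(2m)$, equivalently only if FLO is taken to include the improper transformations implemented by conjugation with odd Clifford elements such as a single $c_k$. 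Your proof handles this correctly and the paper's does not; the discrepancy matters because Lemma 1 feeds directly into the identification of $\Pi_{\Lambda=0}$ in Appendix A, where the trace conditions are checked only against the powers of $\Lambda$.
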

\begin{proof}

Let us denote $c_0=I$, then modify the every matrix $R$ to become $\tilde{R}=I\oplus R$, where $I$ is $1$-dimensional matrix.  The action of $U$ stays the same for the modified matrix $R$:
$$\tilde{U}c_j\tilde{U}^\dagger=\sum_i \tilde{R}_{ji}c_i, \text{ for every } j=0,1,...,2m,  $$ 
  
Any state $\rho\in\cC_{2m}\otimes\cC_{2m}$ can be written as follows:
\begin{equation}\label{rho}
\rho=\sum_{J,L}\lambda_{JL}c_{j_1}...c_{j_{2m}}\otimes c_{l_1}...c_{l_{2m}},
\end{equation}
where $J,L\in\cH$, $\cH: =\{I=(i_1,...,i_{2m}): \, i_k=0,...,2m \text{ for any }1\leq k\leq 2m\}$ and $c_0=I$.

The cardinality of $\cH$ is $n:=(2m+1)^{2m}$. Order sequences in $\cH$ in the following manner:\\
$(0,...,0)$\\
$(0,...,0,l)$ where $1\leq l\leq 2m$\\
...\\
$(l,0,...,0)$\\
$(0,...,0,l_1,l_2)$ with $1\leq l_1,l_2\leq 2m $
...\\
...\\
$(l_1,...,l_{2m})$, with $1\leq l_1,...,l_{2m}\leq 2m$.

Form a $n\times n$ matrix $\lambda$ such that $(\lambda)_{JL}=\lambda_{JL}$. 
Also form a $n\times n$ matrix $\hat{R}$ such that $$\hat{R}_{IJ}=\tilde{R}_{i_1j_1}...\tilde{R}_{i_{2m}j_{2m}}.$$
$\hat{R}$ is an orthogonal matrix since 
\begin{align*}
&\sum_I\hat{R}_{IJ}\hat{R}_{IK}=\sum_I\tilde{R}_{i_1j_1}...\tilde{R}_{i_{2m}j_{2m}}\tilde{R}_{i_1k_1}...\tilde{R}_{i_{2m}k_{2m}}\\
&=\delta_{j_1k_1}...\delta_{j_{2m}k_{2m}}=\delta_{JK}.
\end{align*}
By construction $\hat{R}$ is a direct product of matrices: $1$-dimensional identity matrix, $2m$ matrices of the dimension $2m$, $\binom{2m}{2}$ matrices of dimension $(2m)^2$,... $\binom{2m}{k}$ matrices of dimension $(2m)^k$..., and 1 matrix of dimension $(2m)^{2m}$.

The state $\rho$ is invariant under group $U\otimes U$, therefore 
\begin{align*}
&\rho=\sum_{J,L}\lambda_{JL}c_{j_1}...c_{j_{2m}}\otimes c_{l_1}...c_{l_{2m}}\\
&=\sum_{J,L}\lambda_{JL}(\tilde{U}c_{j_1}\tilde{U}^\dagger)...(\tilde{U}c_{j_{2m}}\tilde{U}^\dagger)\otimes (\tilde{U}c_{l_1}\tilde{U}^\dagger)...(\tilde{U}c_{l_{2m}}\tilde{U}^\dagger)\\
&=\sum_{J,L}\Bigl(\sum_{IK}\lambda_{IK}\hat{R}_{IJ}\hat{R}_{KL}\Bigr)\, c_{j_1}...c_{j_{2m}}\otimes c_{l_1}...c_{l_{2m}}\\
&=\sum_{J,L}(\hat{R}^T\lambda\hat{R})_{JL}\, c_{j_1}...c_{j_{2m}}\otimes c_{l_1}...c_{l_{2m}}.
\end{align*}

Coefficients in front of the same correlators have to be the same, so $\lambda$ commutes with any orthogonal matrix $\hat{R}$. Because of the structure of $\hat{R}$, the matrix that commutes with any $\hat{R}$ has the following structure:\begin{align*}
\lambda=&a_{00}I\oplus a_{1,1}I\oplus...\oplus a_{1, 2m}I\oplus... \\
& ....\oplus a_{k,1}I\oplus...\oplus a_{k,\binom{2m}{k}}I\oplus...\, ....\oplus a_{2m,1}I,
\end{align*}
where the identity corresponding to the coefficient $a_{k,l}\in\bR$ has dimension $(2m)^k$.
Therefore the state $\rho$ has the form
\begin{align*}
\rho&=\alpha_0 I\otimes I+\alpha_1\Lambda+...+\alpha_{2m}\Lambda^{2m},
\end{align*}
for some coefficients $\alpha_j$.
In other words, the invariant subspace of $S$ is spanned by $I\otimes I, \Lambda,..., \Lambda^{2m}$ operators.
\end{proof}

Now we repeat the statement and its proof about the null-space of $\Lambda$.

\begin{theorem}\label{thm:null}
(\cite{Terhal}, Appendix A) The projector onto the null-space of $\Lambda=\sum_i c_i \otimes c_i$ is $\Pi_{\Lambda=0}={2m \choose m} {\cal S}(\ket{0,0}\bra{0,0})$ where $\ket{0}$ is a (Gaussian) vacuum state with respect to some set of annihilation operators $a_i$. Thus the states $\ket{\psi,\psi}$ where $\psi$ is a pure fermionic Gaussian state span the null-space of $\Lambda$ which implies that the null-space of $\Lambda$ is a subspace of the symmetric subspace.
\end{theorem}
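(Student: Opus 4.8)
The plan is to show that the twirled operator $T:=\cS(\ket{0,0}\bra{0,0})$ is a scalar multiple of the kernel projector $\Pi_{\Lambda=0}$, and then to fix the scalar by a dimension count.

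First I would observe that $U\otimes U$ commutes with $\Lambda$ for every FLO $U$: since $Uc_jU^\dagger=(Rc)_j$ with $R$ orthogonal, one has $U\otimes U\,\Lambda\,U^\dagger\otimes U^\dagger=\sum_j (Rc)_j\otimes (Rc)_j=\sum_{a,b}\bigl(\sum_j R_{ja}R_{jb}\bigr)c_a\otimes c_b=\sum_a c_a\otimes c_a=\Lambda$. Consequently $\cS$ preserves every spectral subspace of $\Lambda$, and in particular $\Pi_{\Lambda=0}$ is $\cS$-invariant. On the other hand, applying the recalled equivalence $\Lambda(\rho\otimes\rho)=0\iff\rho$ is pure Gaussian to $\rho=\ket\psi\bra\psi$ gives $\Lambda\ket{\psi,\psi}=0$ for every pure Gaussian $\psi$. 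Since every pure Gaussian state is $\ket\psi=U\ket0$ for some FLO $U$, we have $\ket{\psi,\psi}=U\otimes U\ket{0,0}$, so $T=\int dU\,\ket{\psi,\psi}\bra{\psi,\psi}$ is an average of rank-one projectors onto null vectors of $\Lambda$; hence $\mathrm{range}(T)\subseteq\ker\Lambda$, i.e. $\Lambda T=0$.

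The heart of the argument is then structural and essentially free given Lemma \ref{Invariant_space}. As $\cS$ is idempotent, $T$ is $\cS$-invariant, so the lemma makes $T$ a polynomial in $\Lambda$, that is a linear combination $T=\sum_\mu c_\mu\Pi_\mu$ of the spectral projectors of $\Lambda$. Substituting into $\Lambda T=\sum_\mu\mu\,c_\mu\Pi_\mu=0$ forces $c_\mu=0$ for every nonzero eigenvalue $\mu$, leaving $T=c_0\Pi_{\Lambda=0}$. Thus the span of the Gaussian-symmetric vectors $\ket{\psi,\psi}$ is forced to be all of $\ker\Lambda$, which is precisely the assertion of the theorem. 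Taking traces pins the constant: $\Tr T=1$ because $\cS$ is trace-preserving and $\ket{0,0}\bra{0,0}$ is a state, while $\Tr\Pi_{\Lambda=0}=\dim\ker\Lambda$, so $c_0=1/\dim\ker\Lambda$ and $\Pi_{\Lambda=0}=\dim(\ker\Lambda)\,\cS(\ket{0,0}\bra{0,0})$.

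The remaining and most computational task, which I expect to be the main obstacle, is to evaluate $\dim\ker\Lambda$ and match it to $\binom{2m}{m}$. I would read this off the spectrum of $\Lambda$. Embedding the two copies of $\cC_{2m}$ into one Clifford algebra $\cC_{4m}$ on $\cH\otimes\cH$ through $\gamma_j=c_j\otimes I$ and $\gamma_{2m+j}=W\otimes c_j$, where $W$ is the fermion parity of the first copy chosen so that all $4m$ operators mutually anticommute, one rewrites $\Lambda=-Q\,B$ with $Q=W\otimes I$ and $B=\sum_{j=1}^{2m}\gamma_j\gamma_{2m+j}$. Here $Q^2=I$ and $\{Q,B\}=0$, so $\ker\Lambda=\ker B$. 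The $2m$ commuting Hermitian involutions $i\gamma_j\gamma_{2m+j}$ are built from disjoint Majorana pairs, hence their joint eigenspaces are one-dimensional and labelled by sign patterns in $\{\pm1\}^{2m}$; the eigenvalue of $iB$ on a pattern is the sum of its signs, which vanishes exactly for the $\binom{2m}{m}$ balanced patterns. This yields $\dim\ker\Lambda=\binom{2m}{m}$ (equivalently, one may quote the spectrum from \cite{Bravyi}), and hence $\Pi_{\Lambda=0}=\binom{2m}{m}\,\cS(\ket{0,0}\bra{0,0})$. Finally, each $\ket{\psi,\psi}$ is invariant under the swap of the two copies and these vectors span $\ker\Lambda$, so the null-space lies inside the symmetric subspace, as claimed.
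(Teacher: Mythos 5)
Your proposal is correct and follows essentially the same route as the paper: both arguments hinge on Lemma~\ref{Invariant_space} to conclude that the $\cS$-invariant operator $\cS(\ket{0,0}\bra{0,0})$ is a polynomial in $\Lambda$, then use its annihilation by $\Lambda$ together with a trace normalization to identify it with a multiple of $\Pi_{\Lambda=0}$. The only differences are cosmetic or additive: you kill the nonzero-eigenvalue components directly via the spectral decomposition rather than the paper's pairing of traces against the invariants $\Lambda^i$, and you supply an explicit verification that $\dim\ker\Lambda=\binom{2m}{m}$ (via the embedding into $\cC_{4m}$), a fact the paper asserts without proof.
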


\begin{proof}
In order to prove that $\Pi_{\Lambda=0}={2m \choose m} {\cal S}(\ket{0,0}\bra{0,0})$, we note that both the l.h.s. and the r.h.s. are $U \otimes U$-invariant where $U$ is any FLO transformation. Thus instead of considering whether 
$${\rm Tr} X \Pi_{\Lambda=0}={\rm Tr}(X {2m \choose m} {\cal S}(\ket{0,0}\bra{0,0}))$$
 for any $X$, we can just consider the trace with respect to invariant objects ${\cal S}(X)$.  
 
 From Lemma \ref{Invariant_space}, we know that $\Lambda^i$ for $i=0,1,2,\ldots,2m$ (and linear combinations thereof) are the only invariants under the group $U \otimes U$ where $U$ is FLO transformation. Clearly, 
 $${\rm Tr} \Lambda^i {2m \choose m} {\cal S}(\ket{0,0}\bra{0,0})=0={\rm Tr} \Lambda^i \Pi_{\Lambda=0}$$
  for all $i \neq 0$ while ${\rm Tr} \Pi_{\Lambda=0}={2m \choose m}$ fixes the overall prefactor. 
  
  Having established the form of the projector, it follows directly that the states $\ket{\psi,\psi}$ for any Gaussian $\psi$ span the null-space (Assume this is false and hence a state in the null-space $\ket{\chi}=\ket{\chi_{in}}+\ket{\chi_{out}}$ where $\ket{\chi_{in}}$ is in the span of $\ket{\psi,\psi}$ while $\ket{\chi_{out}}$ is w.l.o.g. orthogonal to any $\ket{\psi,\psi}$. We have $\Pi_{\Lambda=0}\ket{\chi}=\ket{\chi}$ while ${2m \choose m} {\cal S}(\ket{0,0}\bra{0,0}) \ket{\chi}=\ket{\chi_{in}}$ arriving at a contradiction.) 
  
  As $\ket{\psi,\psi}$ for Gaussian pure states $\psi$ span the null-space, and $P \ket{\psi,\psi}=\ket{\psi,\psi}$ with $P$ the SWAP operator, the null-space is a subspace of the symmetric subspace.
\end{proof}

\section{Appendix B. Gaussian-symmetric extension}

In \cite{Bravyi} it was shown that there is an isomorphism between $\cC_{2m}\otimes \cC_{2m}$ and $\cC_{4m}$. Here we provide an explicit isomorphism between $\cC_{2m}^{\otimes n}$ and $\cC_{2mn}$ and explore how one could alternatively express semi-definite Program 1 as an extension of $\rho$ to a physical system with $2mn$ Majorana fermions.

The original Program 1 was the following: given a state $\rho\in\cC_{2m}$ find an extension $\rho_{ext}\in \cC_{2m}^{\otimes n}$ such that 
\begin{enumerate}
\item $\Tr_{2,...,n}\rho_{ext}=\rho $ 
\item $ \Lambda^{k,l}\rho_{ext}=0, \text{ for any }k\neq l.$
\end{enumerate}

The present scheme is the following: given a state $\rho\in\cC_{2m}$ find "an extension" $\mu\in\cC_{2mn}$ that satisfies two condition which we will determine below.

The isomorphism $J$ between $\cC_{2mn}$ and $\cC_{2m}^{\otimes n}$ is given by
$$J(c_j)=I\otimes ...\otimes c_j\otimes P\otimes...\otimes P, \text{ for } 2m(k-1)+1\leq j\leq 2mk,$$
where $c_j$ stands on the $k$-th component and $P=i^mc_1...c_{2m}\in\cC_{2m}$ is a parity operator, $P^2=I$ and $P^\dagger=P$.

Since $J$ preserves commutation relations between the generators, it extends to arbitrary operators by linearity and multiplicativity, $J(XY)=J(X)J(Y).$ And therefore $J$ is the isomorphism.

\textit{As an example consider the case $n=2$. Any state in $\cC_{4m}$ can be written as }
\begin{equation}\label{mu}
\mu=\sum_k\alpha_k Z_k^1Z_k^2\ \in \cC_{4m},
\end{equation}
\textit{where every monomial in $\mu$ is written as a product of two operators $Z^1$ and $Z^2$ that act on $c_1,...,c_{2m}$ and $c_{2m+1},...,c_{4m}$ respectively. This state gets mapped to the following state by $J$: }
$$J(\mu)=\sum_k\alpha_k Z_k^1\otimes P^{\epsilon_k}Z^2_k\ \in\cC_{2m}\otimes \cC_{2m}, $$
\textit{where $\epsilon_k=0$, if $Z^1_k$ is even and $\epsilon_k=1$, if $Z_k^1$ is odd.}
\textit{Note that the reverse can also be easily obtained. Any state $$\sigma=\sum_k \beta_k Z_k^1\otimes Z_k^2 \ \in\cC_{2m}\otimes \cC_{2m} $$ gets mapped onto a state $$J^{-1}(\sigma)=\sum_k \beta_K Z_k^1P^{\epsilon_k}Z_k^2 \ \in\cC_{4m},$$ where $P=c_{2m+1}...c_{4m}.$}

\textit{The operator $\Lambda=\sum_{j=1}^{2m}c_j\otimes c_j$ on $\cC_{2m}\otimes\cC_{2m}$ is equivalent to the operator} $$\Gamma=\sum_{j=1}^{2m}(-1)^{m-j}c_jc_{2m+1}...\hat{c}_{2m+j}...c_{4m} \text{ \ on } \cC_{4m}, $$
\textit{where $\hat{c}_k$ denotes the omission of the operator $c_k$.}

\textit{Therefore the condition $\Lambda(J(\mu))=0$ is equivalent to $\Gamma(\mu)=0$ for any $\mu\in\cC_{4m}$.}

Operator $\Gamma$ can be defined using any generators from $\cC_{2mn}$: for any $1\leq k\neq l\leq n$,
\begin{align}
&\Gamma^{k,l}=\label{Gamma}\\
&\sum_{j=1}^{2m}(-1)^{m-j} \ c_{2m(k-1)+j}\ \ c_{2m(l-1)+1}...\hat{c}_{2m(l-1)+j}...c_{2ml}.\nonumber
\end{align}

Therefore condition $2.$ above is equivalent to the following one 
\begin{equation}\label{second_cond}
\Gamma^{k,l}\mu=0, \text{ for any }k\neq l,
\end{equation}
where $\Gamma$ is defined in (\ref{Gamma}) and $\mu\in\cC_{2mn}$.

\textit{Let us return to the case $n=2$. In this case if the extension $\mu$ of $\rho$ looks like (\ref{mu}), the first condition to the extension gets the form} $$\rho=\Tr_2J(\mu)=\sum_j\alpha_jZ_j^1 \, \Tr (P^{\epsilon_j}Z_j^2). $$
\textit{Here $Z_j^2$ is a monomial in $c_{2m+1},... , c_{4m}$. The trace is non-zero in the only case when $P^{\epsilon_j}Z_j^2=I$. This happens in two cases}
\begin{itemize}
\item $Z_j^1$ \textit{even, then }$Z_j^2=I$,

\item $Z_j^1$ \textit{odd, then} $Z_j^2=P$.
\end{itemize}
\textit{Therefore the state $\rho$ can be obtained from its extension $\mu=\sum_j\alpha_j Z_j^1Z_j^2$ as}
\begin{align*}
\rho=2^{m}\{\sum\alpha_jZ_j^1: &\ j \text{ such that if }Z_j^1 \text{ is even and }\\
& Z_j^2=I \text{ \textbf{or} if } Z_j^1 \text{ is odd and } Z_j^2=P\}.
\end{align*}

In the general case when the extension $\mu=\sum_j \alpha_jZ_j^1...Z_j^n$ is on $\cC_{2mn}$, we obtain
\begin{align*}
\rho=\sum_j\alpha_jZ^1_j &\Tr(P^{\epsilon_j^1}Z_j^2)...\Tr(P^{\epsilon_j^{n-1}}Z_j^n)\\
=2^{m(n-1)}\Bigl\{&\sum\alpha_jZ_j^1: j\text{ such that if }Z_j^1 \text{ is even and }\\
&Z_j^2=I,...Z_j^n=I \text{ \textbf{or} if }Z_j^1\text{ is odd and }\\
&Z_j^2=P,...Z_j^n=P\Bigr\}.
\end{align*}

Thus the extension $\rho_{ext}\in\cC_{2mn}$ of a state $\rho=\sum_j \alpha_j Z_j^1\ \in\cC_{2m}$ is such that
\begin{enumerate}
\item the extension is of the form 
\begin{align*}
\rho_{ext}=&2^{-m(n-1)} \Bigl(\sum_{j : Z_j^1\text{ is even}}\alpha_j Z_j^1+\sum_{j : Z_j^1\text{ is odd}}\alpha_j Z_j^1P...P \Bigr)\\
&+Corr,
\end{align*} 
where the correlations $Corr=\sum_j\beta_jZ_j^1...Z_j^n$ are such that there are no non-zero terms such that $Z_j^1$ is even and $Z_j^2=I$,...$Z_j^n=I$ and also there are no non-zero terms such that $Z_j^1$ is odd and $Z_j^2=P$,...$Z_j^n=P$. 

\item $\Gamma^{k,l}\rho_{ext}=0, \text{ for any }1\leq k\neq l \leq n$, where $\Gamma^{k,l}$ is defined by (\ref{Gamma}).
\end{enumerate}

If we take an {even} state $\rho=\sum_j \alpha_j Z_j^1\ \in\cC_{2m}$ the extension $\rho_{ext}\ \in\cC_{2mn}$ is a state, satisfying condition 2. above, of the following form
$$\rho_{ext}=2^{-m(n-1)}\rho+Corr.$$
Here every term in the correlations $Corr=\sum_j\beta_jZ_j^1...Z_j^n$ acts nontrivially on the last $(n-1)$ spaces spanned by $c_{2m+1}\, ...\, c_{2mn}$. 

To make the extension an even state introduce
\begin{equation}\label{even_ext}
\tilde{\rho}_{ext}=\frac{1}{2}(\rho_{ext}+C_{all}\, \rho_{ext}\, C_{all}),
\end{equation}
where $C_{all}=i^{mn}c_1...c_{2mn}\in\cC_{2mn}$. Since $C_{all}$ commutes with even operators and anti-commutes with odd ones, we see that $\tilde{\rho}_{ext}$ is an even state that satisfies both conditions 1. and 2. above, making it a valid {even} extension of an even state $\rho$.

\end{document}